\documentclass[11pt]{amsart}

\usepackage[margin=1in]{geometry}
\usepackage{amsmath, amssymb, amsthm}
\usepackage{graphicx}
\usepackage{hyperref}
\usepackage{physics}
\usepackage{setspace}
\usepackage{color}

\title{Generalized Fidelity and the Data Processing Inequality}
 
\author{Reza Rajaei}
\email{reza.rajaei-1@ou.edu}

\address{David and Judi Proctor Department of Mathematics, University of Oklahoma, Norman, OK}

\date{}

\newtheorem{theorem}{Theorem}
\newtheorem{lemma}{Lemma}
\newtheorem{corollary}{Corollary}

\begin{document}

\begin{abstract} In this note, we show that the generalized Bures--Wasserstein distance induced by the generalized fidelity does not satisfy the data processing inequality in dimensions greater than two. In dimension two, although we do not settle the DPI in full generality, we derive two sufficient conditions guaranteeing it by using an explicit formula for the real part of the generalized fidelity. Furthermore, we find a necessary and  sufficient condition under which the generalized fidelity reduces to the Uhlmann fidelity. 
 
\end{abstract}

\maketitle

\section*{Introduction} In the context of quantum information theory, various types of fidelity functions have been introduced to measure how similar two quantum states are. In particular, the Uhlmann fidelity is one of the fidelity functions that enjoys very nice properties. In \cite{MR4942599} a generalization that reduces to some of the most familiar fidelity functions has been introduced. In fact, this generalized fidelity between $P$ and $Q$ at $R$, where $P,Q,$ and $R$ are positive semi-definite matrices of the same size, is defined as follows:

$$F_R(P, Q)=\Tr [\sqrt{R^\frac{1}{2}PR^\frac{1}{2}}R^{-1}\sqrt{R^\frac{1}{2}QR^\frac{1}{2}}].$$

As the definition suggests, $R$ has to be positive definite while $P$ and $Q$ can be rank-deficient. As an example of a reduction, we can see that in the case $P=R$, the generalized fidelity becomes the Uhlmann fidelity:

$$F_R(P, Q)=F^U(P, Q)= \Tr(\sqrt{P^\frac{1}{2}QP^\frac{1}{2}}).$$
Moreover, as proved in \cite[Theorem 3.5]{MR4942599}, it turns out that the square root of the expression $B_R(P, Q)= \Tr(P+Q)-2 \Re F_R(P, Q)$ is a true distance. $B_R(P, Q)$ is called the squared generalized Bures--Wasserstein distance. We resolve the first open problem posed in \cite{MR4942599} by showing that $B_R(P, Q)$ does not satisfy the data processing inequality in dimensions $n \geq 3$. In dimension two, we derive an explicit formula for the real part of the generalized fidelity in terms of pairwise Uhlmann fidelities and use it to establish two sufficient conditions for the data processing inequality.

\section*{The Qubit Case}

In this section, we present another formulation of the definition above in the qubit case, which is free of matrix square roots. This new formulation enables us to have a simpler understanding of the generalized fidelity. First, we mention a well-known elementary lemma without proof.  

\begin{lemma}[Cayley--Hamilton formula for square roots]
    The square root of a nonzero $2 \times2$ PSD matrix A is:

    $$\sqrt A= \frac{A+ \sqrt{\det(A)}I_{2}}{\sqrt{\operatorname{Tr}(A)+2 \sqrt{\det(A)} }}.$$

\end{lemma}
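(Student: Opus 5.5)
The plan is to let $S=\sqrt{A}$ be the unique PSD square root of $A$ and derive the formula from the Cayley--Hamilton theorem applied to $S$. Since $S$ is a $2\times 2$ matrix, Cayley--Hamilton gives
\[
S^2-\operatorname{Tr}(S)\,S+\det(S)\,I_2=0 .
\]
Using $S^2=A$ and $\det(S)=\sqrt{\det(A)}$, this becomes
\[
A+\sqrt{\det(A)}\,I_2=\operatorname{Tr}(S)\,S .
\]
The value $\det(S)=\sqrt{\det A}$ is forced: $\det(S)^2=\det(A)$, and $\det(S)\ge 0$ because $S$ is PSD. So it only remains to identify the scalar $\operatorname{Tr}(S)$ and check that we may divide by it.

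To find $\operatorname{Tr}(S)$, let $\mu_1,\mu_2\ge 0$ be the eigenvalues of $S$. Then $\mu_i^2$ are the eigenvalues of $A$, and
\[
\operatorname{Tr}(S)^2=(\mu_1+\mu_2)^2=\mu_1^2+\mu_2^2+2\mu_1\mu_2=\operatorname{Tr}(A)+2\sqrt{\det(A)} .
\]
Since $\operatorname{Tr}(S)\ge 0$, we get $\operatorname{Tr}(S)=\sqrt{\operatorname{Tr}(A)+2\sqrt{\det(A)}}$.

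The only delicate point is division by zero. Because $A\neq 0$ and $A$ is PSD, $\operatorname{Tr}(A)>0$, so $\operatorname{Tr}(S)>0$. Dividing the Cayley--Hamilton identity by $\operatorname{Tr}(S)$ then gives the stated formula. This covers the rank-one case automatically: there $\det A=0$ and the formula reduces to $A/\sqrt{\operatorname{Tr}A}$, which one can check squares to $A$ since $A^2=\operatorname{Tr}(A)\,A$. No separate case split is needed.
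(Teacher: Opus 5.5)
Your argument is correct and complete. Applying Cayley--Hamilton to $S=\sqrt{A}$, identifying $\det S=\sqrt{\det A}$ and $\operatorname{Tr}(S)^2=\operatorname{Tr}(A)+2\sqrt{\det A}$, and using $A\neq 0$ to get $\operatorname{Tr}(S)>0$ is the standard derivation the lemma's name refers to. The paper states this lemma as well known and gives no proof, so your handling of the case $\det A=0$ is in fact the justification it relies on later, e.g.\ when $R^{1/2}PR^{1/2}$ has rank one for pure $P$.
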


Using this lemma,

\[
\begin{aligned}
F_R(P,Q)
&= \operatorname{Tr}\left[
    \frac{
        R^{1/2}PR^{1/2}+\sqrt{\det(R^{1/2}PR^{1/2})}I_{2}
    }{
        \sqrt{\operatorname{Tr}(R^{1/2}PR^{1/2})
        +2\sqrt{\det(R^{1/2}PR^{1/2})}}
    }
    R^{-1}
    \frac{
        R^{1/2}QR^{1/2}+\sqrt{\det(R^{1/2}QR^{1/2})}I_{2}
    }{
        \sqrt{\operatorname{Tr}(R^{1/2}QR^{1/2})
        +2 \sqrt{\det(R^{1/2}QR^{1/2})}}
    }
\right] \\[6pt]
&= \frac{1}{
    \sqrt{\operatorname{Tr}(RP)+2\sqrt{\det(RP)}}
    }
    \frac{1}{
    \sqrt{\operatorname{Tr}(RQ)+2\sqrt{\det(RQ)}}
    } \\[6pt]
&\qquad
\operatorname{Tr}\left[
    \left(R^{1/2}PR^{1/2}
    +\sqrt{\det(R^{1/2}PR^{1/2})}I_{2}\right)R^{-1}
    \left(R^{1/2}QR^{1/2}
    +\sqrt{\det(R^{1/2}QR^{1/2})}I_{2}\right)
\right],
\end{aligned}
\]

which simplifies to:

\[
\begin{aligned}
&\frac{1}{
    \sqrt{\operatorname{Tr}(RP)+2\sqrt{\det(RP)}}
}
\frac{1}{
    \sqrt{\operatorname{Tr}(RQ)+2\sqrt{\det(RQ)}}
}
\\[4pt]
&\qquad
\Big(
    \operatorname{Tr}(RPQ)
    + \operatorname{Tr}(Q)\sqrt{\det(RP)}
\\
&\qquad\qquad
    + \operatorname{Tr}(P)\sqrt{\det(RQ)}
    + \operatorname{Tr}(R^{-1})\sqrt{\det(R^2PQ)}
\Big).
\end{aligned}
\]

Using the fact that for an invertible $2 \times 2$ matrix $R$, we have $\Tr(R^{-1})= \frac{\Tr(R)}{\det (R)}$, we obtain:

\[
\begin{aligned}
&F_R(P,Q)=\frac{1}{
    \sqrt{\operatorname{Tr}(RP)+2\sqrt{\det(RP)}}
}
\frac{1}{
    \sqrt{\operatorname{Tr}(RQ)+2\sqrt{\det(RQ)}}
}
\\[4pt]
&\qquad
\Big(
    \operatorname{Tr}(RPQ)
    + \operatorname{Tr}(Q)\sqrt{\det(RP)}
\\
&\qquad\qquad
    + \operatorname{Tr}(P)\sqrt{\det(RQ)}
    + \operatorname{Tr}(R)\sqrt{\det(PQ)}
\Big).
\end{aligned}
\]

\begin{corollary} In the qubit case, $F_R(P,Q)$ is zero if and only if 

\[
\begin{aligned}
&\qquad
    \operatorname{Tr}(RPQ)
    + \operatorname{Tr}(Q)\sqrt{\det(RP)}
    + \operatorname{Tr}(P)\sqrt{\det(RQ)}
    + \operatorname{Tr}(R)\sqrt{\det(PQ)}
\end{aligned}
\]
 is zero. In particular, for pure qubit states, $F_R(P,Q)$ is zero if and only if $\operatorname{Tr}(RPQ)$ is zero. 
\end{corollary}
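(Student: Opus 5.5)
The plan is to read the corollary directly off the closed formula for $F_R(P,Q)$ derived just above. That formula writes $F_R(P,Q)$ as a numerator $N$ (the four-term expression in the statement) times the two factors $\bigl(\operatorname{Tr}(RP)+2\sqrt{\det(RP)}\bigr)^{-1/2}$ and $\bigl(\operatorname{Tr}(RQ)+2\sqrt{\det(RQ)}\bigr)^{-1/2}$. So $F_R(P,Q)=0$ iff $N=0$, provided these factors are finite and nonzero.

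\textbf{Step 1: the prefactors are finite.} First I would check that the denominators do not vanish. We have $\operatorname{Tr}(RP)=\operatorname{Tr}(R^{1/2}PR^{1/2})$ and $\det(RP)=\det(R^{1/2}PR^{1/2})\ge 0$. The matrix $R^{1/2}PR^{1/2}$ is PSD, and it is nonzero whenever $P\neq 0$, because $R$ is positive definite. Hence its trace is strictly positive and the denominator is a positive real number. The same argument applies to $Q$.

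\textbf{Step 2: the degenerate case.} If $P=0$ or $Q=0$, the Cayley--Hamilton lemma does not apply, since it requires a nonzero matrix. In that case I would argue directly: $F_R(P,Q)=0$ straight from the definition. The expression $N$ also vanishes, since each of its terms contains $P$ or $Q$ through a trace, a determinant, or the product $RPQ$. So the equivalence holds trivially here.

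\textbf{Step 3: the pure-state specialization.} For pure qubit states, $P$ and $Q$ have rank one, so $\det P=\det Q=0$. By multiplicativity of the determinant, $\det(RP)=\det(R)\det(P)=0$, and likewise $\det(RQ)=0$ and $\det(PQ)=0$. The last three terms of $N$ therefore drop out, leaving $N=\operatorname{Tr}(RPQ)$, and the second claim follows from the first. I would also note that $\operatorname{Tr}(RPQ)$ is then generally complex, and "zero" means zero as a complex number. This is consistent with the formula, because the other terms are real.

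\textbf{Main obstacle.} The only real subtlety is justifying the simplification $\operatorname{Tr}(R^{1/2}PR^{1/2}R^{-1}R^{1/2}QR^{1/2})=\operatorname{Tr}(RPQ)$ and the other cross terms, which the text did in passing. This uses cyclicity of the trace. The rest of the argument is bookkeeping.
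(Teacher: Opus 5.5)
Your proposal is correct and takes the paper's route. The paper reads the corollary directly off the closed Cayley--Hamilton formula for $F_R(P,Q)$, and in the pure-state case $\det P=\det Q=0$ removes the three determinant terms, exactly as you do. Your checks that the prefactors are positive (since $R^{1/2}PR^{1/2}\neq 0$ when $P\neq 0$) and your treatment of the trivial case $P=0$ or $Q=0$ make explicit what the paper leaves implicit.
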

    
The second part of the corollary above can be generalized as follows:

\begin{corollary} For two pure states $P$ and $Q$, $F_R(P,Q)$ is zero if and only if $\operatorname{Tr}(RPQ)$ is zero.

\end{corollary}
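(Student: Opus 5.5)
Write $P=|p\rangle\langle p|$ and $Q=|q\rangle\langle q|$ with $R$ positive definite, in arbitrary dimension $n$. The plan is to compute both square roots in the definition of $F_R(P,Q)$ explicitly, using only that $R^{1/2}PR^{1/2}$ and $R^{1/2}QR^{1/2}$ have rank at most one. This plays the role that the Cayley--Hamilton formula of Lemma 1 played in the qubit case.

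Set $u=R^{1/2}p$. Then $R^{1/2}PR^{1/2}=|u\rangle\langle u|$. For a rank-one PSD matrix $A=|u\rangle\langle u|$ with $u\neq 0$ we have $A^2=\|u\|^2A$, so
\[
\sqrt{A}=\frac{A}{\|u\|}=\frac{A}{\sqrt{\operatorname{Tr}A}}.
\]
If $u=0$ the square root is $0$, and the formula is read as $0$. Here $\operatorname{Tr}(R^{1/2}PR^{1/2})=\operatorname{Tr}(RP)=\langle p|R|p\rangle$, which is strictly positive for $p\neq 0$ because $R>0$. The same holds for $Q$ with $v=R^{1/2}q$.

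Substituting into the definition gives
\[
F_R(P,Q)=\frac{\operatorname{Tr}\!\left[R^{1/2}PR^{1/2}\,R^{-1}\,R^{1/2}QR^{1/2}\right]}{\sqrt{\operatorname{Tr}(RP)\operatorname{Tr}(RQ)}}
=\frac{\operatorname{Tr}(RPQ)}{\sqrt{\operatorname{Tr}(RP)\operatorname{Tr}(RQ)}}.
\]
The second equality uses $R^{1/2}R^{-1}R^{1/2}=I$ and cyclicity of the trace: $\operatorname{Tr}(R^{1/2}PQR^{1/2})=\operatorname{Tr}(RPQ)$.

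The denominator is finite and nonzero whenever $P,Q\neq 0$. Hence $F_R(P,Q)=0$ if and only if $\operatorname{Tr}(RPQ)=0$. Equivalently, $\langle q|R|p\rangle\langle p|q\rangle=0$, i.e. either $p\perp q$ or $p$ and $q$ are orthogonal in the $R$-inner product.

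There is no real obstacle in this argument. The only points needing care are the following.
\begin{itemize}
\item The degenerate case $P=0$ or $Q=0$ should be treated as trivial, where both sides vanish, or excluded by the convention that pure states are nonzero.
\item Positive definiteness of $R$ is what guarantees the normalizing traces $\operatorname{Tr}(RP)$ and $\operatorname{Tr}(RQ)$ do not vanish.
\end{itemize}
As a consistency check, in dimension two $\det(RP)=\det(RQ)=\det(PQ)=0$ for pure states, so this formula agrees with the qubit expression derived above.
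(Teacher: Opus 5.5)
Your proof is correct and matches the paper's argument: the paper also uses that $R^{1/2}PR^{1/2}$ and $R^{1/2}QR^{1/2}$ are rank one, so each square root is the matrix divided by the square root of its trace, giving $F_R(P,Q)=\operatorname{Tr}(RPQ)/\sqrt{\operatorname{Tr}(RP)\operatorname{Tr}(RQ)}$. Your extra remarks, that $R>0$ makes the denominator nonzero and that $\operatorname{Tr}(RPQ)=\langle p|q\rangle\langle q|R|p\rangle$, are correct details the paper leaves implicit.
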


 \begin{proof}
     Since $P$ and $Q$ are pure states, both $R^{\frac{1}{2}}PR^{\frac{1}{2}}$ and $R^{\frac{1}{2}}QR^{\frac{1}{2}}$ are rank one. So $\sqrt{R^{\frac{1}{2}}PR^{\frac{1}{2}}}=\frac{R^{\frac{1}{2}}PR^{\frac{1}{2}}}{\sqrt{\Tr(R^{\frac{1}{2}}PR^{\frac{1}{2}})}}$ and $\sqrt{R^{\frac{1}{2}}QR^{\frac{1}{2}}}=\frac{R^{\frac{1}{2}}QR^{\frac{1}{2}}}{\sqrt{\Tr(R^{\frac{1}{2}}QR^{\frac{1}{2}})}}.$ Therefore, 

     \[
F_R(P,Q)
=
\operatorname{Tr}\!\left[
\sqrt{R^{1/2} P R^{1/2}}\,
R^{-1}\,
\sqrt{R^{1/2} Q R^{1/2}}
\right]
=
\frac{\operatorname{Tr}(RPQ)}
{\sqrt{\operatorname{Tr}(RP)}\,\sqrt{\operatorname{Tr}(RQ)}}.
\]

\end{proof}

At this point, we mention four lemmas that allow us to turn the real part of $F_R(P,Q)$ into an expression in terms of the Uhlmann fidelities between $R, P,$ and $Q.$  

\begin{lemma}

\label{lem:trace of three}

Assume qubit density matrices $R, P,$ and $Q$ are written in terms of the Pauli matrices, namely:
$$R=\frac{1}{2}(I_2+\vec{r} \cdot \vec{\sigma}),$$
$$P=\frac{1}{2}(I_2+\vec{p} \cdot \vec{\sigma}),$$
$$Q=\frac{1}{2}(I_2+\vec{q} \cdot \vec{\sigma}).$$

Then, 

$$\Tr(RPQ)= \frac{1}{4}(1+\vec{r}\cdot\vec{p}+\vec{q}\cdot\vec{r}+\vec{p}\cdot\vec{q} + i\vec{q}\cdot(\vec{r} \times \vec{p})),$$

where $\cdot$ and $\times$ respectively denote the inner and cross products. 

\begin{proof}
    Using $(\vec{r} \cdot \vec{\sigma})(\vec{p} \cdot \vec{\sigma})=(\vec{r} \cdot \vec{p})I_2+ i(\vec{r} \times \vec{p}) \cdot \vec{\sigma}$, we conclude that:

    $$RP=\frac{1}{4}(I_2+\vec{r} \cdot \vec{\sigma} +\vec{p} \cdot \vec{\sigma}+(\vec{r} \cdot \vec{p})I_2+i(\vec{r} \times \vec{p}) \cdot \vec{\sigma}),$$

    and consequently, 

\[
\begin{aligned}
RPQ=\frac{1}{8}\Big(&I_2+\vec{r}\cdot\vec{\sigma}
+\vec{p}\cdot\vec{\sigma}
+(\vec{r}\cdot\vec{p})I_2
+i(\vec{r}\times\vec{p})\cdot\vec{\sigma}\\
&+\vec{q} \cdot \vec{\sigma}+(\vec{r}\cdot\vec{q})I_2
+i(\vec{r}\times\vec{q})\cdot\vec{\sigma}
+(\vec{p}\cdot\vec{q})I_2
+i(\vec{p}\times\vec{q})\cdot\vec{\sigma}\\
&+(\vec{r}\cdot\vec{p})\vec{q}\cdot\vec{\sigma}
+i\Big((\vec{q}\cdot(\vec{r}\times\vec{p}))I_2
+i((\vec{r}\times\vec{p})\times\vec{q})\cdot \vec{\sigma}\Big)\Big).
\end{aligned}
\]

Since each component of $\sigma$ has trace zero and the trace of $I_2$ is 2, all the terms including $\vec{\sigma}$ disappear and we obtain that:

$$\Tr(RPQ)= \frac{1}{4}(1+\vec{r}\cdot\vec{p}+\vec{q}\cdot\vec{r}+\vec{p}\cdot\vec{q} + i\vec{q}\cdot(\vec{r} \times \vec{p})).$$
    
\end{proof}
\end{lemma}

Considering the imaginary part of $\Tr(RPQ)$ in Lemma~\ref{lem:trace of three}, we derive a corollary. 

\begin{corollary}
    Under the setting of Lemma~\ref{lem:trace of three}, $F_R(P, Q)$ is a real number if and only if $\vec{p}, \vec{q}, $ and $\vec{r}$ are coplanar. 
\end{corollary}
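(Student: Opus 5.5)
The plan is to read off the imaginary part of $F_R(P,Q)$ directly from the square-root-free qubit formula derived above, and then apply Lemma~\ref{lem:trace of three}.

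First, I will check that every ingredient of that formula except $\operatorname{Tr}(RPQ)$ is a real number, and that the prefactor is finite and positive.
\begin{itemize}
\item Since $R$ is positive definite and $P$ is a density matrix, hence nonzero, we have $\operatorname{Tr}(RP)=\operatorname{Tr}(R^{1/2}PR^{1/2})>0$. The same holds for $Q$.
\item The determinants $\det(RP)=\det R\det P$, $\det(RQ)$ and $\det(PQ)$ are products of determinants of PSD matrices. They are therefore nonnegative, so their square roots are real and nonnegative.
\item The traces $\operatorname{Tr}(P)$, $\operatorname{Tr}(Q)$ and $\operatorname{Tr}(R)$ are real.
\end{itemize}
Consequently the prefactor
\[
D^{-1}:=\Big(\operatorname{Tr}(RP)+2\sqrt{\det(RP)}\Big)^{-1/2}\Big(\operatorname{Tr}(RQ)+2\sqrt{\det(RQ)}\Big)^{-1/2}
\]
is a well-defined strictly positive real number. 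This also confirms that the Cayley--Hamilton lemma applies, since $R^{1/2}PR^{1/2}$ and $R^{1/2}QR^{1/2}$ are nonzero PSD matrices.

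It follows that
\[
\operatorname{Im}F_R(P,Q)=D^{-1}\operatorname{Im}\operatorname{Tr}(RPQ).
\]
By Lemma~\ref{lem:trace of three}, $\operatorname{Im}\operatorname{Tr}(RPQ)=\tfrac14\,\vec q\cdot(\vec r\times\vec p)$. Since $D^{-1}>0$, $F_R(P,Q)$ is real if and only if the scalar triple product $\vec q\cdot(\vec r\times\vec p)$ vanishes.

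The last step is the standard fact from linear algebra that $\vec q\cdot(\vec r\times\vec p)=\det[\vec r\,\vec p\,\vec q]$. This determinant is zero exactly when the three Bloch vectors are linearly dependent, that is, when they lie in a common plane through the origin. This is the sense of "coplanar" intended here.

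There is no real obstacle. The only point needing care is the first step: guaranteeing that the denominator neither vanishes nor contributes an imaginary part, which comes down to the positivity of $\operatorname{Tr}(RP)$ and $\operatorname{Tr}(RQ)$. I will also state explicitly that "coplanar" means coplanar with the origin, since the triple product detects linear dependence rather than affine coplanarity of the three points.
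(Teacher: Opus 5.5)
Your proposal is correct and is essentially the paper's argument. In the square-root-free qubit formula every term except $\operatorname{Tr}(RPQ)$ is real and the prefactor is a positive real, so reality of $F_R(P,Q)$ reduces to the vanishing of $\tfrac14\,\vec q\cdot(\vec r\times\vec p)$ from Lemma~\ref{lem:trace of three}. Your extra checks, positivity of $\operatorname{Tr}(RP)$ and $\operatorname{Tr}(RQ)$ and the clarification that ``coplanar'' means lying in a common plane through the origin, make explicit details the paper leaves unstated.
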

\begin{lemma}
    For qubit states $P$ and $Q$, we have:

    $$F^U(P, Q)= \sqrt{\Tr(PQ)+2\sqrt{\det(PQ)}}.$$

    \begin{proof}
See Section $2$ in \cite{Jozsa1994}.
\end{proof}

\end{lemma}

\begin{lemma}

\label{lem:fidelity}
   Assume qubit density matrices $P$ and $Q$ are written in terms of the Pauli matrices, namely:

$$P=\frac{1}{2}(I_2+\vec{p}\cdot \vec{\sigma}),$$
$$Q=\frac{1}{2}(I_2+\vec{q}\cdot \vec{\sigma}).$$

Then, 

$$(F^U(P,Q))^2=\frac{1}{2}(1+\vec{p}\cdot\vec{q}+\sqrt{(1-||p||^2)(1-||q||^2)}).$$ 

\begin{proof}
See Section $2$ in \cite{Jozsa1994}.
\end{proof}

\end{lemma}

\begin{lemma}

\label{lem:det}
    For a qubit density matrix $P=\frac{1}{2}(I_2+\vec{p}\cdot \vec{\sigma}),$ $\det(P)=\frac{1-||\vec{p}||^2}{4}$.

    \begin{proof}
For a qubit density matrix
\[
P=\frac{1}{2}\left(I_2+\vec{p}\cdot\vec{\sigma}\right),
\]
where $\vec{p}=(p_x,p_y,p_z)$ is the Bloch vector, we have:
\[
P=
\frac{1}{2}
\begin{pmatrix}
1+p_z & p_x-ip_y\\
p_x+ip_y & 1-p_z
\end{pmatrix}.
\]
Therefore,
\[
\begin{aligned}
\det(P)
&=\frac{1}{4}\left[(1+p_z)(1-p_z)
-(p_x-ip_y)(p_x+ip_y)\right]\\
&=\frac{1}{4}\left(1-p_x^2-p_y^2-p_z^2\right)\\
&=\frac{1-\|\vec{p}\|^2}{4}.
\end{aligned}
\]

    \end{proof}
\end{lemma}

Note that Lemma~\ref{lem:fidelity} and Lemma~\ref{lem:det} imply that:

$$(F^U(P,Q))^2=\frac{1}{2}(1+\vec{p}\cdot\vec{q})+2\sqrt{\det(PQ)}.$$

Therefore, for qubit states, we will have:

\[
\begin{aligned}
\operatorname{Re}\!\left(F_R(P,Q)\right)
&=
\operatorname{Re}\!\Bigg[
\frac{1}{
    \sqrt{\operatorname{Tr}(RP)+2\sqrt{\det(RP)}}
}
\frac{1}{
    \sqrt{\operatorname{Tr}(RQ)+2\sqrt{\det(RQ)}}
}
\\[4pt]
&\qquad\qquad
\Big(
    \operatorname{Tr}(RPQ)
    + \operatorname{Tr}(Q)\sqrt{\det(RP)}
    + \operatorname{Tr}(P)\sqrt{\det(RQ)}
    + \operatorname{Tr}(R)\sqrt{\det(PQ)}
\Big)
\Bigg]
\\[6pt]
&=
\frac{
    \frac14
    \left(
        1+\vec r\!\cdot\!\vec p
        +\vec q\!\cdot\!\vec r
        +\vec p\!\cdot\!\vec q
    \right)
    +\sqrt{\det(RP)}
    +\sqrt{\det(RQ)}
    +\sqrt{\det(PQ)}
}{
    F^U(R,P)\,F^U(R,Q)
}, 
\end{aligned}
\]

which a simple application of the lemmas above indicates is equal to:

\[
\frac{
\bigl(F^U(P,Q)\bigr)^2
+\bigl(F^U(R,P)\bigr)^2
+\bigl(F^U(R,Q)\bigr)^2
-1
}{
2F^U(R,P)\,F^U(R,Q)
}.
\]

We summarize the last result as a corollary. 

\begin{corollary}

\label{cor:fidelity}
    For qubit states, it holds that: 

    \[
\operatorname{Re}\!\left(F_R(P,Q)\right)
=
\frac{
\bigl(F^U(P,Q)\bigr)^2
+\bigl(F^U(R,P)\bigr)^2
+\bigl(F^U(R,Q)\bigr)^2
-1
}{
2F^U(R,P)\,F^U(R,Q)
}.
\]

\end{corollary}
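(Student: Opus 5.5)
The plan is to start from the square-root-free qubit formula for $F_R(P,Q)$ derived above and take real parts term by term. All four numerator terms except $\operatorname{Tr}(RPQ)$ are nonnegative reals, because density matrices have unit trace. By Lemma~\ref{lem:trace of three}, the real part of $\operatorname{Tr}(RPQ)$ is $\frac14(1+\vec r\cdot\vec p+\vec q\cdot\vec r+\vec p\cdot\vec q)$. The two factors in the denominator are also real and positive. By the Jozsa formula $F^U(P,Q)=\sqrt{\operatorname{Tr}(PQ)+2\sqrt{\det(PQ)}}$, applied with the pairs $(R,P)$ and $(R,Q)$, the denominator is exactly $F^U(R,P)\,F^U(R,Q)$. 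This yields the intermediate expression already displayed:
\[
\operatorname{Re}F_R(P,Q)=\frac{\tfrac14(1+\vec r\cdot\vec p+\vec q\cdot\vec r+\vec p\cdot\vec q)+\sqrt{\det(RP)}+\sqrt{\det(RQ)}+\sqrt{\det(PQ)}}{F^U(R,P)\,F^U(R,Q)}.
\]

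Next I rewrite the numerator using pairwise fidelities. Lemmas~\ref{lem:fidelity} and~\ref{lem:det} give, for each pair $(A,B)$ among $R,P,Q$,
\[
(F^U(A,B))^2=\tfrac12(1+\vec a\cdot\vec b)+2\sqrt{\det(AB)},
\]
using $\sqrt{(1-\|a\|^2)(1-\|b\|^2)}=4\sqrt{\det A\det B}$. Summing the three identities gives
\[
\sum (F^U)^2=\tfrac32+\tfrac12(\vec r\cdot\vec p+\vec q\cdot\vec r+\vec p\cdot\vec q)+2\bigl(\sqrt{\det(RP)}+\sqrt{\det(RQ)}+\sqrt{\det(PQ)}\bigr).
\]
Subtracting $1$ and halving produces precisely $\frac14(1+\vec r\cdot\vec p+\vec q\cdot\vec r+\vec p\cdot\vec q)+\sum\sqrt{\det}$, which is the numerator. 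Dividing by $F^U(R,P)F^U(R,Q)$ gives the claimed formula.

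The only delicate point is the bookkeeping of constants. The $\frac14$ inside $\operatorname{Re}\operatorname{Tr}(RPQ)$ must match half of $\frac32-1$, and the factor $2$ in front of $\sqrt{\det(AB)}$ must be correct. One also needs $\sqrt{\det(RP)}=\sqrt{\det R\det P}$, which holds since determinants multiply and both are nonnegative. Finally, $R$ must be invertible so that the denominators are nonzero; this is already assumed in the definition of $F_R$.
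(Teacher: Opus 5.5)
Your proposal is correct and follows the paper's own route. You take real parts in the square-root-free qubit formula, use Lemma~\ref{lem:trace of three} and unit traces for the numerator, and use Jozsa's formula to identify the denominator as $F^U(R,P)\,F^U(R,Q)$. You then recognize the numerator as $\tfrac12\bigl(\sum (F^U)^2 - 1\bigr)$, and your explicit constant bookkeeping fills in the step the paper only calls ``a simple application of the lemmas above.''
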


\section*{Reduction to the Uhlmann Fidelity}

In this section, we find a necessary and sufficient condition under which the generalized fidelity reduces to the Uhlmann fidelity. For this purpose, we need two lemmas.

\begin{lemma}
  Assume $M$ is a square matrix. Then $\left\|M\right\|_{1}
 =|\Tr(M)|$ if and only if $M=e^{i \theta} A$ for some $\theta \in \mathbb{R}$, where $A$ is a PSD matrix.

\begin{proof}
    Using the SVD decomposition, $M$ can be written as:

$$M= \sum_{i=1}^{r} \sigma_iu_iv_i^*,$$

where $u_1, ..., u_r$ and $v_1, ..., v_r$ are two orthonormal sets and $\sigma_1, ..., \sigma_r $ are the non-zero singular-values. Then, 

\[
\begin{aligned}
|\operatorname{Tr}(M)|
&= \left|\operatorname{Tr}\left(\sum_{i=1}^r \sigma_i u_i v_i^*\right)\right|
\leq \sum_{i=1}^r \sigma_i |\operatorname{Tr}(u_i v_i^*)|
= \sum_{i=1}^r \sigma_i |v_i^*u_i| \\
&\leq \sum_{i=1}^r \sigma_i
= \left\|M\right\|_{1}
.
\end{aligned}
\]

Hence, the equality case occurs if and only if $\operatorname{Tr}(u_iv_i^*)=e^{i\theta}$ for some $\theta \in \mathbb{R}$, for each $i$, which implies $$\operatorname{Tr}(u_iv_i^*)=v_i^*u_i=e^{i\theta}.$$ So, $$u_i=e^{i\theta}v_i.$$ Therefore, $$M= \sum_{i=1}^{r} \sigma_iu_iv_i^*=\sum_{i=1}^{r} \sigma_ie^{i\theta}v_iv_i^*,$$ which yields the claim. 
\end{proof}
\end{lemma}
\begin{lemma}

\label{lem:uhlmann-factorization}
Let $A, B$ be square matrices of the same size, then it holds that:
$$F^U(AA^*, BB^*)=\left\|A^{*}B\right\|_{1}.$$

\end{lemma}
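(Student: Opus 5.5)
The plan is to reduce the statement to the definition of the Uhlmann fidelity via polar decompositions. Recall $F^U(P,Q)=\Tr\sqrt{P^{1/2}QP^{1/2}}$. With $P=AA^*$ and $Q=BB^*$, we have $P^{1/2}=\sqrt{AA^*}$, and we want to show that $\Tr\sqrt{P^{1/2}QP^{1/2}}$ equals the sum of the singular values of $A^*B$.

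\textbf{Step 1 (polar decompositions).} Write $A=P^{1/2}U$ and $B=Q^{1/2}V$ with $U,V$ unitary. For square matrices such decompositions always exist, even when $A$ or $B$ is singular. Indeed, $AA^*=P$ forces $|A^*|=P^{1/2}$, and the polar decomposition of $A^*$ then gives $A^*=U^*P^{1/2}$.

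\textbf{Step 2 (reduce to $\|P^{1/2}Q^{1/2}\|_1$).} Substituting,
$$A^*B=U^*P^{1/2}Q^{1/2}V.$$
The trace norm is unitarily invariant, since singular values are unchanged by left and right multiplication by unitaries. Hence
$$\|A^*B\|_1=\|P^{1/2}Q^{1/2}\|_1.$$

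\textbf{Step 3 (compute the trace norm).} By definition, $\|M\|_1=\Tr\sqrt{MM^*}$. Taking $M=P^{1/2}Q^{1/2}$ gives
$$MM^*=P^{1/2}Q^{1/2}Q^{1/2}P^{1/2}=P^{1/2}QP^{1/2},$$
so
$$\|P^{1/2}Q^{1/2}\|_1=\Tr\sqrt{P^{1/2}QP^{1/2}}=F^U(P,Q).$$
This completes the argument.

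The only subtle point is Step 1 in the rank-deficient case, where the unitary factor in the polar decomposition is not unique but still exists. One could avoid this altogether by working directly with $MM^*$ for $M=A^*B$. Then $MM^*=A^*BB^*A=A^*QA$, which is unitarily similar to $Q^{1/2}AA^*Q^{1/2}=Q^{1/2}PQ^{1/2}$: the nonzero eigenvalues of $XX^*$ and $X^*X$ coincide, here with $X=A^*Q^{1/2}$. Hence $\Tr\sqrt{MM^*}=\Tr\sqrt{Q^{1/2}PQ^{1/2}}$. The last quantity equals $F^U(P,Q)$ by the symmetry of the Uhlmann fidelity, which follows from the same $XX^*$ versus $X^*X$ argument applied to $X=P^{1/2}Q^{1/2}$.
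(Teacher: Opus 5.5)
The paper gives no argument for this lemma; it only cites Watrous, Lemma 3.21. Your proof is correct and self-contained. It uses the polar decompositions $A=P^{1/2}U$ and $B=Q^{1/2}V$, which are valid for singular square matrices as you note. It then uses unitary invariance of the trace norm, and finally the identity $\|P^{1/2}Q^{1/2}\|_1=\Tr\sqrt{P^{1/2}QP^{1/2}}$. This is the standard argument for the fact.

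What your version adds over the bare citation is Step 3. Watrous defines the fidelity as $\|\sqrt{P}\sqrt{Q}\|_1$, whereas the paper defines $F^U(P,Q)=\Tr\sqrt{P^{1/2}QP^{1/2}}$. The citation therefore quietly relies on exactly the identification you check by computing $MM^*$.

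Your alternative route through $XX^*$ versus $X^*X$ is also sound. It avoids polar decomposition, and with it the non-uniqueness of the unitary factor in the singular case. In exchange it needs the symmetry of $F^U$, which you justify by the same eigenvalue argument.
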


\begin{proof}
    See Lemma $3.21$ in \cite{Watrous}. 
\end{proof}

Using the fact that the trace norm is greater than or equal to the absolute value of the trace of a matrix, we will have:

\begin{align*}
\left|F_R(P,Q)\right|
&=
\left|
\operatorname{Tr}
\left(
\sqrt{R^{1/2}PR^{1/2}}\,
R^{-1}\,
\sqrt{R^{1/2}QR^{1/2}}
\right)
\right|
\\[4pt]
&\leq
\left\|
\sqrt{R^{1/2}PR^{1/2}}\,
R^{-1}\,
\sqrt{R^{1/2}QR^{1/2}}
\right\|_1
\\[4pt]
&=
\operatorname{Tr}
\sqrt{
\left(
\sqrt{R^{1/2}PR^{1/2}}\,
R^{-1}\,
\sqrt{R^{1/2}QR^{1/2}}
\right)
\left(
\sqrt{R^{1/2}QR^{1/2}}\,
R^{-1}\,
\sqrt{R^{1/2}PR^{1/2}}
\right)
}
\\[4pt]
&=
\operatorname{Tr}
\sqrt{
\sqrt{R^{1/2}PR^{1/2}}\,
R^{-1}
\left(R^{1/2}QR^{1/2}\right)
R^{-1}\,
\sqrt{R^{1/2}PR^{1/2}}
}
\\[4pt]
&=
\operatorname{Tr}
\sqrt{
\sqrt{R^{1/2}PR^{1/2}}\,
R^{-1/2}QR^{-1/2}\,
\sqrt{R^{1/2}PR^{1/2}}
}
\\[4pt]
&=
F^U
\left(
R^{1/2}PR^{1/2},
R^{-1/2}QR^{-1/2}
\right)
\\[4pt]
&=
\left\|\sqrt{P}\sqrt{Q}\right\|_{1}
\\[4pt]
&=
F^U(P,Q),
\end{align*}

where the penultimate equality follows from Lemma~\ref{lem:uhlmann-factorization}. This analysis combined with Lemma 2 yields the necessary and sufficient condition for reduction to the Uhlmann fidelity, which we express as follows:

\begin{theorem}
$|F_R(P,Q)|$ equals $F^U(P,Q)$ if and only if
\[
\begin{aligned}
&\sqrt{R^{1/2}PR^{1/2}}\,
R^{-1}\,
\sqrt{R^{1/2}QR^{1/2}}= e^{i\theta}A
\end{aligned}
\]
for some $\theta\in\mathbb{R}$, where $A\geq0$.

In particular, $F_R(P,Q)$ equals $F^U(P,Q)$ if and only if
\[
\sqrt{R^{1/2}PR^{1/2}}\,
R^{-1}\,
\sqrt{R^{1/2}QR^{1/2}}
\geq0.
\]

\end{theorem}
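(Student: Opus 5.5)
Write $M=\sqrt{R^{1/2}PR^{1/2}}\,R^{-1}\sqrt{R^{1/2}QR^{1/2}}$, so that $F_R(P,Q)=\Tr(M)$. The proof will simply observe where the chain of (in)equalities displayed just before the theorem can be lossy. That chain reads
\[
|F_R(P,Q)|=|\Tr(M)|\le \|M\|_1 = F^U(P,Q).
\]
Every step after the inequality is an identity. Concretely, $\|M\|_1=\Tr\sqrt{MM^*}$ is rewritten as $F^U(R^{1/2}PR^{1/2},R^{-1/2}QR^{-1/2})$. Lemma~\ref{lem:uhlmann-factorization} is then applied with $A=\sqrt{R^{1/2}PR^{1/2}}$ and $B=R^{-1/2}\sqrt{R^{1/2}QR^{1/2}}$. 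Here one should note that $BB^*=R^{-1/2}(R^{1/2}QR^{1/2})R^{-1/2}=Q$, and that $A^*B$ is exactly $M$ up to the factor placement, so $\|M\|_1=F^U(AA^*,BB^*)$. Since $AA^*=R^{1/2}PR^{1/2}$ and not $P$, I would write this step out carefully. The cleaner factorization is $A'=R^{-1/2}\sqrt{R^{1/2}PR^{1/2}}$, which satisfies $A'A'^*=P$. Then $A'^*B=\sqrt{R^{1/2}PR^{1/2}}R^{-1}\sqrt{R^{1/2}QR^{1/2}}=M$, hence $\|M\|_1=F^U(P,Q)$ directly. Verifying this identity is the only point needing care, and it is the main potential obstacle; everything else is bookkeeping.

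Once $\|M\|_1=F^U(P,Q)$ is secured as an equality, we get $|F_R(P,Q)|=F^U(P,Q)$ if and only if $|\Tr M|=\|M\|_1$. By the trace-norm lemma (Lemma 2 in the text's numbering of the reduction section), this holds if and only if $M=e^{i\theta}A$ with $A\ge 0$. This proves the first claim.

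For the second claim, suppose first that $M\ge 0$. Then $\Tr M=\|M\|_1=F^U(P,Q)$, so $F_R(P,Q)=F^U(P,Q)$. Conversely, suppose $F_R(P,Q)=F^U(P,Q)$. Then in particular $|F_R|=F^U$, so $M=e^{i\theta}A$ with $A\ge0$, and $\Tr M=e^{i\theta}\Tr A=F^U(P,Q)=\Tr A$ (using $\|M\|_1=\Tr A$). If $\Tr A>0$ this forces $e^{i\theta}=1$, hence $M=A\ge0$. If $\Tr A=0$, then $A=0$, so $M=0\ge0$ trivially. This degenerate case should be stated explicitly.
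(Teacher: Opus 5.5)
Your proof is correct and follows the paper's argument: $|F_R(P,Q)|=|\Tr M|\le\|M\|_1=F^U(P,Q)$, with the equality case characterized by the trace-norm lemma. Your factorization $M={A'}^{*}B$ with ${A'}{A'}^{*}=P$ and $BB^{*}=Q$ gives $\|M\|_1=F^U(P,Q)$ in a single application of Lemma~\ref{lem:uhlmann-factorization}, rather than the paper's detour through $F^U(R^{1/2}PR^{1/2},R^{-1/2}QR^{-1/2})$. Discard your first attempted factorization, though, since $\sqrt{R^{1/2}PR^{1/2}}\,R^{-1/2}\sqrt{R^{1/2}QR^{1/2}}\neq M$. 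Your explicit handling of the ``in particular'' clause ($e^{i\theta}=1$ is forced when $\Tr A>0$, and $M=0$ otherwise) fills in a step the paper leaves unstated.
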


\section*{The Generalized Bures--Wasserstein Distance and the Data Processing Inequality}

In this section, we answer the first problem in the OPEN PROBLEMS section in \cite{MR4942599}. In fact, we show that $B_R(P,Q) \geq B_{\Phi(R)} (\Phi(P), \Phi(Q))$ does not hold in general.

\begin{theorem}
    $B_R(P,Q) \geq B_{\Phi(R)} (\Phi(P), \Phi(Q))$ does not hold in general for an arbitrary tuple $(P,Q,R, \Phi)$, where $P, Q, R, \Phi(P), \Phi(Q),$ and $ \Phi(R)$ are density matrices of $n \times n$ and $n \geq 3$.
\end{theorem}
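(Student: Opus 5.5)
The plan is to build an explicit counterexample in dimension $3$ and then pad it for $n>3$. The input triple will be \emph{commuting}, so its generalized fidelity is easy to evaluate. The channel will be a classical--quantum (measure-and-prepare) channel whose outputs are essentially qubit states, so that the output fidelity can be computed with Corollary~\ref{cor:fidelity}.

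\textbf{Step 1 (input side).} If $P,Q,R$ are simultaneously diagonal, then
$$\sqrt{R^{1/2}PR^{1/2}}\,R^{-1}\sqrt{R^{1/2}QR^{1/2}}=\sqrt{PQ},$$
so $F_R(P,Q)=\Tr\sqrt{PQ}$ does not depend on $R$. I would take
$$P=e_1e_1^*,\qquad Q=e_2e_2^*,\qquad R=\operatorname{diag}(\varepsilon,\varepsilon,1-2\varepsilon),$$
with $R$ positive definite and $P,Q$ allowed to be PSD. Then $F_R(P,Q)=0$ and $B_R(P,Q)=2$.

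\textbf{Step 2 (channel).} Define
$$\Phi(X)=\sum_{i=1}^3\langle e_i|X|e_i\rangle\,\sigma_i ,$$
where the $\sigma_i$ are $3\times 3$ density matrices. This map is CPTP, being measurement in the computational basis followed by preparation. Then $\Phi(P)=\sigma_1$, $\Phi(Q)=\sigma_2$, and $\Phi(R)=\varepsilon(\sigma_1+\sigma_2)+(1-2\varepsilon)\sigma_3$.

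First consider the idealized choice: $\sigma_1,\sigma_2,\sigma_3$ are the pure qubit states (in the span of $e_1,e_2$) whose Bloch vectors lie in a plane at mutual angles of $120^\circ$. Each pairwise squared Uhlmann fidelity is then $\frac{1+\cos 120^\circ}{2}=\frac14$, by Lemma~\ref{lem:fidelity}. Corollary~\ref{cor:fidelity} gives
$$\Re F_{\sigma_3}(\sigma_1,\sigma_2)=\frac{3/4-1}{2\cdot 1/4}=-\tfrac12 ,$$
so the output distance would be $B=2+1=3>2$. The negative real part of the generalized fidelity is exactly what lets the output distance exceed the input one.

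\textbf{Step 3 (regularization and padding).} In the idealized choice, $\Phi(R)$ is not invertible in $M_3$. It is not even invertible on the qubit block as $\varepsilon\to0$. I would therefore replace each $\sigma_i$ by $(1-\delta)\sigma_i+\delta I_3/3$ and use $\varepsilon,\delta>0$ small.

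$B_{\Phi(R)}(\Phi(P),\Phi(Q))$ is continuous on the region where $\Phi(R)$ is positive definite. I would argue that it tends to the qubit value $3$ as $\varepsilon,\delta\to0$. The cleanest way is to restrict everything to the qubit block where possible, apply the explicit formula of Corollary~\ref{cor:fidelity} there, and note that the perturbed quantities differ by $O(\varepsilon+\delta)$. This yields
$$B_{\Phi(R)}(\Phi(P),\Phi(Q))>2=B_R(P,Q),$$
which violates the DPI. For $n>3$, I would append extra basis vectors, give $R$ and the $\sigma_i$ a small weight on them to keep invertibility, and repeat the argument.

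\textbf{Main obstacle.} The delicate step is the limit in Step 3. The generalized fidelity involves $\Phi(R)^{-1}$, and the third eigenvalue direction, which is orthogonal to the qubit block, gets weight only $\delta$. I need to check that the blow-up of $R^{-1}$ there is cancelled by the factors $\sqrt{R^{1/2}\cdot R^{1/2}}$, so that no spurious contribution survives.

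If this is awkward to control, I would instead choose the $\sigma_i$ as full-rank $3\times3$ states close to the three pure states, and simply evaluate $F$ numerically or in closed form for one concrete small $\delta$. Since only one counterexample is needed, a single verified numerical instance showing $B>2$ would suffice.
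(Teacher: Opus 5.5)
Your argument is correct in outline and follows the paper's skeleton. Your input triple $e_1e_1^*$, $e_2e_2^*$, $\operatorname{diag}(\varepsilon,\varepsilon,1-2\varepsilon)$ is exactly the paper's $(P_1,Q_1,R_1^{\epsilon})$, with $F=0$ and $B=2$. The paper's channel is also measure-and-prepare; it uses a slightly smeared diagonal POVM where you use a projective measurement.

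The real difference is the output triple that carries the negative generalized fidelity. The paper builds a full-rank $3\times 3$ triple from positive definite $A,B,R^{-1}$ with $\Tr(AR^{-1}B)<0$. It therefore only needs continuity of $F$ at a point where the reference matrix is positive definite. You use the pure trine qubit states and Corollary~\ref{cor:fidelity}. This gives the transparent value $-\tfrac12$ (output $B\approx 3$) and makes explicit that negativity already occurs for essentially qubit triples. The price is that your output reference state approaches the boundary of the positive definite cone.

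The obstacle you flag is harmless, and you do not need numerics. With $\sigma_i\mapsto(1-\delta)\sigma_i+\tfrac{\delta}{3}I_3$, all three outputs are block diagonal for $\mathbb{C}^3=\operatorname{span}(e_1,e_2)\oplus\mathbb{C}e_3$. Every operation in $F$ respects the blocks, so $F$ is the sum of the blockwise values.

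In the $1\times 1$ block, $\sqrt{rp}\,r^{-1}\sqrt{rq}=\sqrt{pq}=\delta/3$, so $R^{-1}$ never blows up. The qubit block equals $(1-\delta/3)$ times the generalized fidelity of the normalized block states, using $F_{cR}(aP,bQ)=\sqrt{ab}\,F_R(P,Q)$. By Corollary~\ref{cor:fidelity}, its real part is a continuous function of three Uhlmann fidelities. All three tend to $\tfrac12$, and the denominator stays bounded away from $0$, so the output value tends to $-\tfrac12$.

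It is even cleaner to regularize only along $e_3$, i.e.\ $\sigma_i'=(1-\delta)\sigma_i+\delta\,e_3e_3^*$. Since the trine states sum to $\tfrac32 I_2$, the qubit block of $\Phi(R)$ is $(1-\delta)\bigl(\tfrac{3\varepsilon}{2}I_2+(1-3\varepsilon)\sigma_3\bigr)$, which is already invertible for $\varepsilon>0$. The rank-one formula $F_R(P,Q)=\Tr(RPQ)/\sqrt{\Tr(RP)\Tr(RQ)}$ then gives exactly
\[
F_{\Phi(R)}\bigl(\Phi(P),\Phi(Q)\bigr)=-(1-\delta)\,\frac{1-6\varepsilon}{2(1+3\varepsilon)}+\delta .
\]
For example, $\varepsilon=\delta=0.01$ yields $B_{\Phi(R)}(\Phi(P),\Phi(Q))\approx 2.88>2$, with no limiting argument at all. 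Padding to $n>3$ works the same way with extra $1\times 1$ blocks.
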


\begin{proof}

 We can find $3 \times 3$ positive definite matrices $A, B$, and $R^{-1}$ such that $\Tr (AR^{-1}B)<0.$ For example, $A=\begin{pmatrix} 9 & 1 & 0 \\ 1 & 1 & 0 \\ 0 & 0 & 1 \end{pmatrix}, R^{-1}=\begin{pmatrix} 1 & -2 & 0 \\ -2 & 5 & 0 \\ 0 & 0 & 1 \end{pmatrix},  $ and $B=\begin{pmatrix} 1 & 2 & 0 \\ 2 & 5 & 0 \\ 0 & 0 & 1 \end{pmatrix}.$

Now, defining $Q=R^{\frac{-1}{2}}B^2R^{\frac{-1}{2}}$ and $P=R^{\frac{-1}{2}}A^2R^{\frac{-1}{2}}$, we have:

$$F_R(P,Q)= \Tr(AR^{-1}B) <0.$$

 Since scaling $R, P, $ and $Q$ by positive numbers does not affect the sign of the trace, We may assume that $P$, $Q$, and $R$ are density matrices such that $F_R(P,Q)= \alpha <0$. At this point, let us consider the channel defined as follows:

$$\Phi_{\epsilon}(X)= \Tr( 
\begin{pmatrix}
1-2\epsilon & 0 & 0 \\
0 & \epsilon & 0 \\
0 & 0 & \epsilon
\end{pmatrix}X) P + \Tr( 
\begin{pmatrix}
\epsilon & 0 & 0 \\
0 & 1-2\epsilon & 0 \\
0 & 0 & \epsilon
\end{pmatrix}X) Q
+\Tr( 
\begin{pmatrix}
\epsilon & 0 & 0 \\
0 & \epsilon & 0 \\
0 & 0 & 1-2\epsilon
\end{pmatrix}X) R.$$

Observe that:

$$ \lim_{\epsilon \to 0} \Phi_{\epsilon}(\begin{pmatrix}
1 & 0 & 0 \\
0 & 0 & 0 \\
0 & 0 & 0
\end{pmatrix})=P, $$

$$ \lim_{\epsilon \to 0} \Phi_{\epsilon}(\begin{pmatrix}
0 & 0 & 0 \\
0 & 1 & 0 \\
0 & 0 & 0
\end{pmatrix})=Q, $$

$$ \lim_{\epsilon \to 0} \Phi_{\epsilon}(\begin{pmatrix}
\epsilon & 0 & 0 \\
0 & \epsilon & 0 \\
0 & 0 & 1-2\epsilon
\end{pmatrix})=R. $$

Note that the limits are viewed entry-wise.

Calling $\begin{pmatrix}
1 & 0 & 0 \\
0 & 0 & 0 \\
0 & 0 & 0
\end{pmatrix}$, $\begin{pmatrix}
0 & 0 & 0 \\
0 & 1 & 0 \\
0 & 0 & 0
\end{pmatrix}$, and $\begin{pmatrix}
\epsilon & 0 & 0 \\
0 & \epsilon & 0 \\
0 & 0 & 1-2\epsilon
\end{pmatrix}$, respectively, $P_1$,  $Q_1$,and $R_1^{\epsilon}$, since $P_1Q_1=0$, we will have:

$$0=F^U(P_1,Q_1) \geq |F_{R_1^{\epsilon}}(P_1, Q_1)| \implies |F_{R_1^{\epsilon}}(P_1, Q_1)|=0.$$

On the other hand, using the entry-wise continuity of matrix operations needed to compute the generalized fidelity, we obtain that:

$$\lim_{\epsilon \to 0} F_{\Phi_{\epsilon}(R_1^{{\epsilon}})} (\Phi_{\epsilon}(P_1), \Phi_{\epsilon}(Q_1))=F_R(P,Q).$$

Hence, for some $\epsilon$ small enough, we can find some $R_1^{{\epsilon}}=\begin{pmatrix}
\epsilon & 0 & 0 \\
0 & \epsilon & 0 \\
0 & 0 & 1-2\epsilon
\end{pmatrix}$  and some $\Phi_{\epsilon}$ such that $F_{\Phi_{\epsilon}(R_1^{{\epsilon}})} (\Phi_{\epsilon}(P_1), \Phi_{\epsilon}(Q_1))$ is close enough to $\alpha$, which is a negative number, while $F_{R_1^{{\epsilon}}}(P_1,Q_1)=0.$
Hence, the claim is disproved when $n=3$. To extend this proof to higher dimensions, we may consider:

\[
E_1=\begin{pmatrix}
1-2\epsilon&0&0&0&\cdots&0\\
0&\epsilon&0&0&\cdots&0\\
0&0&\epsilon&0&\cdots&0\\
0&0&0&0&\cdots&0\\
\vdots&\vdots&\vdots&\vdots&\ddots&\vdots\\
0&0&0&0&\cdots&0
\end{pmatrix}
\]

\[
E_2=\begin{pmatrix}
\epsilon&0&0&0&\cdots&0\\
0&1-2\epsilon&0&0&\cdots&0\\
0&0&\epsilon&0&\cdots&0\\
0&0&0&0&\cdots&0\\
\vdots&\vdots&\vdots&\vdots&\ddots&\vdots\\
0&0&0&0&\cdots&0
\end{pmatrix}
\]

and

 \[
E_3=\begin{pmatrix}
\epsilon&0&0&0&\cdots&0\\
0&\epsilon&0&0&\cdots&0\\
0&0&1-2\epsilon&0&\cdots&0\\
0&0&0&1&\cdots&0\\
\vdots&\vdots&\vdots&\vdots&\ddots&\vdots\\
0&0&0&0&\cdots&1
\end{pmatrix}
\]

to construct $\Phi_{\epsilon}$, with appropriate corresponding $P, Q, $ and $R$ such that $F_R(P,Q)$ is a negative real number, as follows: 
$$\Phi_{\epsilon}(X)= \Tr(E_1X)P+\Tr(E_2X)Q+\Tr(E_3X)R.$$ Along with

\[
P_1=\begin{pmatrix}
1&0&0&0&\cdots&0\\
0&0&0&0&\cdots&0\\
0&0&0&0&\cdots&0\\
0&0&0&0&\cdots&0\\
\vdots&\vdots&\vdots&\vdots&\ddots&\vdots\\
0&0&0&0&\cdots&0
\end{pmatrix}.
\]

\[
Q_1=\begin{pmatrix}
0&0&0&0&\cdots&0\\
0&1&0&0&\cdots&0\\
0&0&0&0&\cdots&0\\
0&0&0&0&\cdots&0\\
\vdots&\vdots&\vdots&\vdots&\ddots&\vdots\\
0&0&0&0&\cdots&0
\end{pmatrix}.
\]

and 

\[
R_1^{\epsilon}=\begin{pmatrix}
\frac{2\epsilon}{n-1}&0&0&0&\cdots&0\\
0&\frac{2\epsilon}{n-1}&0&0&\cdots&0\\
0&0&1-2\epsilon&0&\cdots&0\\
0&0&0&\frac{2\epsilon}{n-1}&\cdots&0\\
\vdots&\vdots&\vdots&\vdots&\ddots&\vdots\\
0&0&0&0&\cdots&\frac{2\epsilon}{n-1}
\end{pmatrix},
\]

a similar counterexample can be obtained in higher dimensions.

\end{proof}

As we can see, the argument in the above proof does not work for qubit states. However, by using Corollary~\ref{cor:fidelity}, we derive two corollaries. 

\begin{corollary}
  For qubit states $P,Q, $ and $R$, if the real part of $F_R(P,Q)$ is negative or zero, then for any channel $\Phi$ from qubit states to qubit states, we have $B_R(P,Q) \geq B_{\Phi(R)} (\Phi(P), \Phi(Q))$.

  \begin{proof}
Since the Uhlmann fidelity is increasingly monotone under channels, we have:

\[
\begin{aligned}
\operatorname{Re}\!\left(
F_{\Phi(R)}(\Phi(P),\Phi(Q))
\right)
&=
\frac{
\bigl(F^U(\Phi(P),\Phi(Q))\bigr)^2
+\bigl(F^U(\Phi(R),\Phi(P))\bigr)^2
+\bigl(F^U(\Phi(R),\Phi(Q))\bigr)^2
-1
}{
2F^U(\Phi(R),\Phi(P))F^U(\Phi(R),\Phi(Q))
}
\\[4pt]
&\geq
\frac{
\bigl(F^U(P,Q)\bigr)^2
+\bigl(F^U(R,P)\bigr)^2
+\bigl(F^U(R,Q)\bigr)^2
-1
}{
2F^U(\Phi(R),\Phi(P))F^U(\Phi(R),\Phi(Q))
}
\\[4pt]
&\geq
\frac{
\bigl(F^U(P,Q)\bigr)^2
+\bigl(F^U(R,P)\bigr)^2
+\bigl(F^U(R,Q)\bigr)^2
-1
}{
2F^U(R,P)F^U(R,Q)
}
\\[4pt]
&=
\operatorname{Re}\!\left(F_R(P,Q)\right).
\end{aligned}
\]

Note that the last inequality holds since we have assumed that the real part of $F_R(P,Q)$ is negative or zero.
  \end{proof}
\end{corollary}

\begin{corollary}
    For qubit states $P,Q, $ and $R$, if $F^U(P,Q) \leq \min \{ F^U(R,Q), F^U(R,P)\}$, then for any channel $\Phi$ from qubit states to qubit states, we have $B_R(P,Q) \geq B_{\Phi(R)} (\Phi(P), \Phi(Q))$.

    \begin{proof}
        For simplicity, assume $c=F^U(P,Q)$, $b=F^U(R,Q)$, and $a=F^U(R,P)$. Then $F^U(\Phi(P),\Phi(Q))=c+z$, $F^U(\Phi(R),\Phi(P))=a+x$, and $F^U(\Phi(R),\Phi(Q))=b+y$ for some $x, y, z \geq 0$.
        
          Moreover, 

        $$\operatorname{Re}\!\left(
F_{\Phi(R)}(\Phi(P),\Phi(Q))
\right)=\frac{(c+z)^2+(a+x)^2+(b+y)^2-1}{2(a+x)(b+y)} \geq \frac{c^2+(a+x)^2+(b+y)^2-1}{2(a+x)(b+y)}.$$

Viewing $\frac{c^2+(a+x)^2+(b+y)^2-1}{2(a+x)(b+y)}$ as a function in terms of $x,y$, we will have:

$$F_x= \frac{(a+x)^2-(b+y)^2-c^2+1}{2(a+x)^2(b+y)}, $$

$$F_y=\frac{(b+y)^2-(a+x)^2-c^2+1}{2(a+x)(b+y)^2}.$$

Since $c \leq \min \{a,b\}$, $x, y \geq 0$, and $(a+x)^2, (b+y)^2 \leq 1$, it hold that: $F_x, F_y \geq 0.$ 

Therefore, 

$$\operatorname{Re}\!\left(
F_{\Phi(R)}(\Phi(P),\Phi(Q))
\right) \geq \frac{c^2+(a+x)^2+(b+y)^2-1}{2(a+x)(b+y)} \geq \frac{c^2+a^2+b^2-1}{2ab}= \operatorname{Re}\!\left(F_R(P,Q)\right).$$

    \end{proof}
\end{corollary}

\section*{Acknowledgments}

The author thanks Prof. J.A. Ch\'avez-Dom\'inguez for helpful discussions and useful suggestions. The author also acknowledges using the web version of ChatGPT to improve some LaTeX code and to find relevant references.

\bibliography{references}
\bibliographystyle{amsalpha}

\end{document}